\newcommand\blfootnote[1]{%
  \begingroup
  \renewcommand\thefootnote{}\footnote{#1}%
  \addtocounter{footnote}{-1}%
  \endgroup
}
\newlength\figureheight
\newlength\figurewidth
\newtheorem{lemma}{Lemma}
\newtheorem{definition}{Definition}
\newtheorem{remark}{Remark}
\newtheorem{example}{Example}
\newtheorem{proposition}[lemma]{Proposition}
\newtheorem{fact}[lemma]{Fact}
\begin{document}

\title{Spatially Coupled Codes with Sub-Block Locality: Joint Finite Length-Asymptotic Design Approach\vspace{-0.3cm}}
\author{
    \IEEEauthorblockN{H. Esfahanizadeh$^{*\dagger}$, and E. Ram$^{*\ddagger}$, Y. Cassuto$^\ddagger$, and L. Dolecek$^\dagger$ }
     \IEEEauthorblockA{
        $^\ddagger$Andrew and Erna Viterbi Department of Electrical Engineering,
        Technion, Haifa, Israel\\
        \{s6eshedr@campus, ycassuto@ee\}.technion.ac.il
    }
    \IEEEauthorblockA{
        $^\dagger$University of California, Los Angeles (UCLA), Los Angeles, USA\\
      hesfahanizadeh@ucla.edu, dolecek@ee.ucla.edu \vspace{-0.6cm}
    }
}
\maketitle

\begin{abstract} 
SC-LDPC codes with sub-block locality can be decoded locally at the level of sub-blocks that are much smaller than the full code block, thus providing fast access to the coded information. The same code can also be decoded globally using the entire code block, for increased data reliability. In this paper, we pursue the analysis and design of such codes from both finite-length and asymptotic lenses. This mixed approach has rarely been applied in designing SC codes, but it is beneficial for optimizing code graphs for local and global performance simultaneously. Our proposed framework consists of two steps: 1) designing the local code for both threshold and cycle counts, and 2) designing the coupling of local codes for best cycle count in the global design.\vspace{-0.15cm}
\blfootnote{$*$ equal contribution}

\end{abstract}

\section{Introduction}
Spatially-coupled low-density parity-check (SC-LDPC) codes \cite{FelstromIT1999} are known to have many desired properties, such as threshold saturation \cite{KudekarRichUrb13} and linear-growth of minimal trapping sets of typical codes from the ensemble \cite{Mitchell11}. These properties imply good BER performance in the \emph{waterfall} and \emph{error floor} regions, using the Belief-Propagation (BP) decoder.
From these properties emerged a simple and effective design methodology: first choose a (usually regular) protograph with good MAP threshold \cite{MitchLentCost15}, and then optimize the coupling (edge spreading) for minimum incidence of short cycles \cite{EsfHar19}.

Recently, \cite{RamCass18b} introduced SC-LDPC codes with \emph{sub-block locality}, meaning that in addition to the usual full-block decoding, the codes can be decoded \emph{locally} in small sub-blocks for fast read access. Formally, a codeword from an ($L,n)$ code with \emph{sub-block locality}, i.e., $L,n\in\mathbb{N}$, consists of $L$ sub-blocks, each being a codeword of some \emph{local} code of length $n$ that can be decoded independently of other sub-blocks. The concatenation of the $L$ sub-blocks forms a codeword of a stronger \emph{global} code of length $Ln$ that provides higher data protection when needed \cite{CassHemo17, RamCass18A}. For these codes, it was shown in \cite{RamCass18b} that the existing design methodology of SC-LDPC codes is no longer sufficient, because one needs to optimize the code for the local decoding as well.

While \cite{RamCass18b} focused on asymptotic analysis of regular codes over the binary erasure channel, in this work, we optimize performance for both asymptotic and finite-length performances over the AWGN channel. Furthermore, we consider irregular local protographs to obtain superior performance \cite{Liva}. Our code design consists of two stages: local design, and global design (conditioned on the local code). In local design (Section~\ref{Sec:local_des}), we consider local irregular codes, comparing between two extreme options which we call \textit{balanced} and \textit{unbalanced}.
For the asymptotic analysis, we use the EXIT method \cite{TenBrink04}, and for the finite-length analysis, we build upon the combinatorial cycle-enumeration method from \cite{EsfHar19}. For some parameters our analytic results show a trade-off between asymptotic and finite-length performances, while for others the same choice optimizes both.
In global design (Section~\ref{Sec:CC_opt}), we extend the methods from \cite{EsfHar19} to optimize the global cycle incidence \emph{given the local code}. 
Simulation results (Section~\ref{Sec:Simulations}) show the conditional global design gives better performance than the existing coupling optimization without locality considerations.

\section{Preliminaries}
\subsection{SC-LDPC codes with sub-block locality}\label{Sub:SC-LDPCL}
An LDPC protograph is a small bipartite graph represented by a $\gamma\times\kappa$ bi-adjacency matrix $B$, i.e., there is an edge between check node (CN) $i$ and variable node (VN) $j$ if and only if $B_{i,j}=1$. In general, $B_{i,j}>1$ (parallel edges) is allowed, but in this work we focus on $B_{i,j}\in \{0,1\}$.
A sparse parity-check matrix $H$ (Tanner graph) is generated from $B$ by a lifting operation characterized by a positive integer $p$ that is called the circulant size. 
The rows (resp. columns) of $H$ corresponding to row $i\in\{0,1,\ldots,\gamma-1\}$ (resp. column $j\in\{0,1,\ldots,\kappa-1\}$) of $B$, are called row group $i$ (resp. column group $j$). For \emph{array-based} (AB) lifting \cite{FanTurbo2000}, the prime circulant size ensures no cycle-$4$ exists. Thus, this paper (partially) focuses on cycles-$6$.

Let $H$ be the parity-check matrix of an LDPC code. An SC-LDPC code \cite{FelstromIT1999} with memory $m$ and coupling-length $L$ is constructed from $H$ by partitioning it into $m+1$ matrices $H=\sum_{\tau=0}^{m}H_\tau$, and placing them $L$ times on the diagonal of the coupled parity-check matrix $H_{SC}$.
In this work, we focus on $m=1$ SC codes, thus the partitioning operation determines which (non-zero) circulant is assigned to $H_0$ and which one is assigned to $H_1$ (when referring to protographs, we use $B_0$ and $B_1$). We represent this partitioning by a \textit{ternary} matrix $P$, where $P_{i,j}\in\{0,1,X\}$. If $P_{i,j}=X$, then there is a $p\times p$ zero matrix in row group $i$ and column group $j$ of $H$. Otherwise, the non-zero circulant is assigned to $H_{P_{i,j}}$. This description captures SC constructions from both regular and irregular protographs.

For local decoding, only CNs that are not connected to VNs outside this sub-block can help. We call these CNs \emph{local CNs} (LCNs). All other CNs are called \emph{coupling CNs} (CCNs) \cite{RamCass18b}. In terms of partitioning $H$, rows in $P$ that have both elements $1$ and $0$ result in CCNs in the coupled matrix; we mark the number of such rows as $\gamma_C$. An SC-LDPC code with sub-block locality is constructed by constraining the partitioning such that $\gamma_L\triangleq \gamma-\gamma_C$ rows in $P$, corresponding to LCNs, lead to a non-zero asymptotic (local) decoding threshold  \cite{RamCass18b}. Without loss of generality, the rows of $P$ are ordered such that the first $\gamma_C$ rows correspond to CCNs. 

\subsection{Asymptotic Analysis of Protographs: The EXIT method}
The EXtrinsic Information Transfer (EXIT) method \cite{TenBrink04} is a useful tool for analyzing and designing LDPC codes in the asymptotic regime over the AWGN channel with channel parameter $\sigma$. Let $ J\colon [0,\infty)\to [0,1) $ be a function that represents the mutual information between the channel input and a corresponding message passing in the Tanner graph, and let $s\in[0,\infty)$ be the message's standard deviation.
For a VN of degree $d_v$ in the protograph, with incoming EXIT values $\{J_{i}\}_{i=1}^{d_v-1}$, the VN$\rightarrow$CN EXIT value is given by
\begin{align}\label{Eq:J VN}
J_{out}^{(V)}\!\left (s_{ch},J_{1},\ldots,J_{d_v\!-\!1}\right )\!=\!J\!\!\left (\!\sqrt{\sum_{i=1}^{d_v-1}\!\!\left (J^{-1}(J_{i})\right )^2\!+\!s_{ch}^2}\right )\!,
\end{align}
where $s_{ch}^2=4/\sigma^2$.
For a CN of degree $d_c$ in the protograph with incoming EXIT values $\{J_{j}\}_{j=1}^{d_c-1}$, the CN$\rightarrow$VN EXIT value is
\begin{align}\label{Eq:J CN}
\begin{split}
J_{out}^{(C)}\!(J_{1},\ldots,J_{d_c-1})\! =\! 1\!-\!J_{out}^{(V)}\!\left (0,1\!-\!J_{1},\ldots,1\!-\!J_{d_c\!-\!1}\right ).
\end{split}
\end{align}
\looseness=-1
In simulations, we use approximations of $J(\cdot)$ and $J^{-1}(\cdot)$  \cite{TenBrink04}. The functions $J_{out}^{(V)}$ and $J_{out}^{(C)}$ are monotonically non decreasing with respect to all their arguments.
By alternately applying \eqref{Eq:J VN} and \eqref{Eq:J CN} for every edge in a protograph with varying values of the channel parameter $ \sigma$, a threshold value $ \sigma^*$ can be found, such that all EXIT values on VNs approach $ 1 $ as number of iterations increases if and only if $ \sigma<\sigma^* $ \cite{Liva}. We mark the threshold of a protograph $B $ by $ \sigma^*(B) $.
Given two protographs, it is not clear, in general, which one yields a better threshold since many parameters need to be tracked. However, in some cases we can order the thresholds of two protographs. The following ordering of regular protographs (with different rates) will be used in the sequel as a supporting lemma.
\begin{fact}\label{Fact: Threshold monotonicity}
Let $\sigma^*_1$ and $\sigma^*_2$ be the EXIT thresholds of  $(\gamma_1,\kappa_1)$-regular and $(\gamma_2,\kappa_2)$-regular protographs.  If $\kappa_1=\kappa_2$ and $\gamma_1\leq\gamma_2$, then $\sigma^*_1\leq\sigma^*_2$, and if $\gamma_1=\gamma_2$ and $\kappa_1\leq\kappa_2$, then $\sigma^*_1\geq\sigma^*_2$.
\end{fact}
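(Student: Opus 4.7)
The plan is to reduce the EXIT analysis of a regular protograph to a scalar one-dimensional recursion, prove monotonicity of this recursion in each of the parameters $\gamma$ and $\kappa$ separately, and then conclude by a coupling argument between the iterations of the two protographs run at a common noise level.

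First, I exploit the symmetry of a $(\gamma,\kappa)$-regular protograph: every edge carries the same EXIT value in each half-iteration. Substituting $d_v=\gamma$ identical incoming messages into \eqref{Eq:J VN} and $d_c=\kappa$ identical incoming messages into \eqref{Eq:J CN} collapses the per-edge update into the scalar recursion $x_{\ell+1}=f_{\gamma,\kappa}(x_\ell;s_{ch})$, with $x_0=J(s_{ch})$, where $y_\kappa(x)=1-J\!\left(\sqrt{\kappa-1}\,J^{-1}(1-x)\right)$ and $f_{\gamma,\kappa}(x;s_{ch})=J\!\left(\sqrt{(\gamma-1)\bigl[J^{-1}(y_\kappa(x))\bigr]^2+s_{ch}^2}\right)$. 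By the standard EXIT-chart criterion recalled before the statement, $\sigma^*(B)$ equals the supremum of $\sigma$ for which $x_\ell\to 1$.

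Next, I establish three monotonicity properties, all of which follow directly from the fact that $J$ and $J^{-1}$ are nondecreasing and that $J^{-1}(z)\ge 0$ on $[0,1)$: (i) for fixed parameters and $s_{ch}$, the map $x\mapsto f_{\gamma,\kappa}(x;s_{ch})$ is nondecreasing; (ii) for fixed $(x,s_{ch},\kappa)$, the map $\gamma\mapsto f_{\gamma,\kappa}(x;s_{ch})$ is nondecreasing, because enlarging $\gamma$ adds a nonnegative term under the outer square root; and (iii) for fixed $(x,s_{ch},\gamma)$, the map $\kappa\mapsto f_{\gamma,\kappa}(x;s_{ch})$ is nonincreasing, because enlarging $\sqrt{\kappa-1}$ inflates the inner $J$ and hence shrinks $y_\kappa(x)$, which in turn shrinks the argument of the outer $J$.

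Finally, I couple the iterates $\{x_\ell^{(1)}\}$ and $\{x_\ell^{(2)}\}$ produced by the two protographs at a common $\sigma$, starting from the common initial value $J(s_{ch})$. In the case $\kappa_1=\kappa_2$ and $\gamma_1\le\gamma_2$, a one-line induction using (i) and (ii) gives $x_\ell^{(2)}\ge x_\ell^{(1)}$ for every $\ell$; since $x_\ell^{(2)}\le 1$, whenever $x_\ell^{(1)}\to 1$ the sandwich forces $x_\ell^{(2)}\to 1$, so the set of noise levels for which the second iteration converges contains that of the first, giving $\sigma^*_1\le\sigma^*_2$. The case $\gamma_1=\gamma_2$, $\kappa_1\le\kappa_2$ is symmetric, using (i) and (iii) with the dominance of iterates reversed, yielding $\sigma^*_1\ge\sigma^*_2$. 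The main obstacle I anticipate is sign-bookkeeping in property (iii): $\kappa$ enters $y_\kappa$ through a single complementation that then sits inside $J^{-1}$ of the outer update, so one must carefully track the chain of monotonicities to verify that the net effect of increasing $\kappa$ is to shrink $f_{\gamma,\kappa}$. Once (iii) is pinned down, the coupling step is entirely routine.
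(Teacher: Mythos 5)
Your proof is correct, but note that the paper itself offers no proof of this statement: it is labeled a \emph{Fact} and invoked as a known ordering of regular-ensemble EXIT thresholds, so there is nothing in the paper to compare your argument against step by step. What you supply is a sound, self-contained justification. The reduction to the scalar recursion $x_{\ell+1}=f_{\gamma,\kappa}(x_\ell;s_{ch})$ is valid for any biregular protograph, since with identical initial messages every CN of degree $\kappa$ and every VN of degree $\gamma$ produces the same outgoing value by induction; your formulas for $y_\kappa$ and $f_{\gamma,\kappa}$ match \eqref{Eq:J VN}--\eqref{Eq:J CN} specialized to $d_v=\gamma$, $d_c=\kappa$ (using $J^{-1}\ge 0$ to drop the square root of a square). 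Properties (i)--(iii) are exactly the monotonicity of $J_{out}^{(V)}$ and $J_{out}^{(C)}$ in their arguments and in the number of arguments, which the paper itself asserts just before the Fact and reuses in the proof of Proposition~\ref{Prop:thU<thB} (the inequality \eqref{Eq:y>x} there is the same kind of iterate-domination you carry out here). The sign chain in (iii) is tracked correctly: increasing $\kappa$ inflates the inner $J$, deflates $y_\kappa$, and hence deflates $f_{\gamma,\kappa}$. The coupling induction and the passage from containment of convergence regions to the ordering of thresholds via the ``iff'' characterization of $\sigma^*$ are both fine. In short: the statement is unproven in the paper, and your argument is an acceptable proof of it, built from the same monotonicity ingredients the paper uses elsewhere.
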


\subsection{Short-Cycle Optimization}
\label{Sub:cycles_opt}
Short cycles have a negative impact on the performance of block-LDPC and SC-LDPC codes under BP decoding: 1) they affect the independence of messages that are transferred on the graph, 2) they enforce upper-bounds on the minimum distance, and 3) they form combinatorial objects in the Tanner graphs that are known to be problematic \cite{SmarandacheIT2012,EsfHar19}.
\begin{definition}\label{def_ov_par}
Consider a binary matrix $R$. A degree-$d$ overlap parameter $t_{\{i_1,\dots,i_d\}}$ is the number of columns in which all rows of $R$ indexed by ${\{i_1,\dots,i_d\}}$ have $1$s.
\end{definition}

The overlap parameters of the matrix $R=[B_0^\textnormal{T}\hspace{0.1cm}B_1^\textnormal{T}]^\textnormal{T}$ (known as replica \cite{EsfHar19}) of size $2\gamma\times\kappa$ contains all information we need to find the number of cycles in the corresponding SC code's protograph. We are particularly interested in cycles-$6$, as they are the shortest cycles for practical LDPC codes (most practical high-rate LDPC codes, in particular the codes in this paper, are designed with girth 6). The set of non-zero overlap parameters is:\vspace{-0.1cm}
\begin{equation}\label{equ_nz_ov_par}
\begin{split}
\mathcal{O}=\{t_{\{i_1,\dots,i_d\}}\textnormal{ }|\textnormal{ }1\leq d\leq \gamma,0\leq i_1,\dots,i_d<2\gamma,\\ \forall \{i_u,i_v\}\subset\{i_1,\dots,i_d\}\hspace{0.1cm}i_u\neq i_v \textnormal{ (mod $\gamma$)}\}.
\end{split}
\end{equation}
The overlap parameters in (\ref{equ_nz_ov_par}) are not all independent. The set of all \emph{independent} non-zero overlap parameters is
$\mathcal{O}_\textnormal{ind}=\{t_{\{i_1,\dots,i_d\}}\textnormal{ }|\textnormal{ }1\leq d\leq \gamma, 0\leq i_1,\dots,i_d<\gamma\}$, \cite{EsfHar19}.
The number of cycles-$6$ in the protograph of an SC code with parameters $m=1$, $L$, and $\mathcal{O}_{\textnormal{ind}}$ is given by $F=LF_1(\mathcal{O}_{\textnormal{ind}})+(L-1)F_2(\mathcal{O}_{\textnormal{ind}})$, where $F_1$ and $F_2$ are the number of cycles-$6$ that span one and two replica(s) of the coupled protograph, respectively, and they are determined solely as functions of overlap parameters.

The discrete optimization problem of minimizing cycles-$6$ is $F^*=\min_{\mathcal{O}_\textnormal{ind}}F$. %, and the constraints of this optimization problem are the conditions under which the overlap parameters and the subsequent partitioning are valid.
This optimization for identifying the optimal overlap set, and consequently the optimal partitioning, results in the minimum number of cycles-$6$ in an SC protograph \cite{EsfHar19}. The approach is called the optimal overlap (OO) partitioning. In this paper, we customize the OO partitioning in order to design SC-LDPC codes with sub-block locality for local and global decoding.
One contribution of our paper (Section \ref{Sec:CC_opt}) is to extend OO partitioning to find the optimal design of CCNs conditioned on existence of a specific number of LCNs.

\section{Local Design}
\label{Sec:local_des}
In this section, we propose two protograph constructions for local codes of a SC-LDPC code with sub-block locality and parameters $\gamma_L$, $\kappa$, and $\nu$, where $\nu\in[0,\kappa-1]$ is the number of zero circulants per local code. The two local code designs we propose both have the same rate but stand at two ends of the spectrum of irregular designs with the given parameters. We first define some matrices that are used in the constructions.

For integers $l$, $k$, and $i$ such that $0\leq i < l$, let $Q(l,k;i)$ and $S(l,k)$ be $l\times k$  matrices, such that 
\begin{align*}
 &\left[Q(l,k;i)\right]_{s,t} = \left\{ 
    \begin{array}{ll}
         0&s=i  \\
         1&\text{otherwise},
    \end{array}\right.\\
&\left[S(l,k)\right]_{s,t} = \left\{ 
    \begin{array}{ll}
        0&s\in[0,k)\,,t=k-s-1  \\
        1& \text{otherwise}.
    \end{array}\right.
\end{align*}
Let $\mathbf{1}(l,k)$ be an all-one matrix and $\mathbf{0}(l,k)$ be an all-zero matrix with size $l\times k$, and let $\nu=a\gamma_L+b$ with integers $a,b$ such that $0\leq b < \gamma_L$. The \textit{balanced} and \textit{unbalanced} local code constructions are represented by the protograph matrices $B_\mathcal{B}$ and $B_\mathcal{U}$, respectively, and defined as follows:\vspace{-0.2cm}
\begin{flalign}\label{def_mat_BB}
B_\mathcal{B}\hspace{-0.08cm}=
{\small\left(\!\!
\begin{array}{c:c:c:c:c}
\!\mathbf{1}(\gamma_L,\kappa-\nu)\hspace{-0.08cm}&\hspace{-0.08cm} S(\gamma_L,b)\hspace{-0.08cm}&\hspace{-0.08cm} Q(\gamma_L,a;\gamma_L\!-\!1)\hspace{-0.08cm}&\hspace{-0.08cm}\dots\hspace{-0.08cm}&\hspace{-0.08cm} Q(\gamma_L,a;0)\!
\end{array}\!\! \right)}
\end{flalign}\vspace{-0.5cm}
\begin{flalign}\label{def_mat_BU}
B_\mathcal{U}={\small\left(\!\!
\begin{array}{c:c}
\mathbf{1}(\gamma_L,\kappa-\nu) & Q(\gamma_L,\nu;0) 
\end{array}\!\! \right),}
\end{flalign}
where the vertical dashed lines represent the horizontal concatenation of sub-matrices. $B_\mathcal{B}$ and $B_\mathcal{U}$ are both $\gamma_L \times \kappa$ matrices with $\nu$ zero entries; in $B_\mathcal{B}$, zeros are uniformly distributed among the rows, while in $B_\mathcal{U}$, all zeros are in the first row.

\begin{example}\label{Ex:Const 1}
	Let $ \gamma_L=3 $, $\kappa=13$, and $ \nu = 10 $. Then,
	\begin{align*}
	&B_\mathcal{B} = 
	\left(\begin{array}{c:c:c:c:c}
	1\,1\,1&0&1\,1\,1&1\,1\,1&0\,0\,0\\
	1\,1\,1&1&1\,1\,1&0\,0\,0&1\,1\,1\\
	1\,1\,1&1&0\,0\,0&1\,1\,1&1\,1\,1
	\end{array}\right),\\
	&B_\mathcal{U} = 
	\left(\begin{array}{c:cccc}
    1\,1\,1&0&0\,0\,0&0\,0\,0&0\,0\,0 \\
    1\,1\,1&1&1\,1\,1&1\,1\,1&1\,1\,1\\
    1\,1\,1&1&1\,1\,1&1\,1\,1&1\,1\,1
    \end{array} \right).
	\end{align*}
\end{example}

\subsection{Threshold Derivations}

\begin{proposition}\label{Prop:thU<thB}
	Let $\kappa$, $\gamma_L$, and $\nu<\kappa$ be positive integers. If $ \kappa-\lfloor\tfrac{\nu}{\gamma_L}\rfloor \leq \nu$, then $ \sigma^*(B_\mathcal{U})\leq  \sigma^*(B_\mathcal{B}) $.
\end{proposition}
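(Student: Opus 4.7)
The strategy is to sandwich both thresholds by the threshold of a common regular reference, the $(\gamma_L-1,\nu)$-regular protograph; denote this threshold by $\sigma_R^*$. The goal is to show $\sigma^*(B_\mathcal{U})\leq \sigma_R^*\leq \sigma^*(B_\mathcal{B})$, and the hypothesis $\kappa-\lfloor\nu/\gamma_L\rfloor\leq\nu$ will be exactly the condition used on the right-hand side. Throughout, write $a=\lfloor\nu/\gamma_L\rfloor$, so the VN degrees in both $B_\mathcal{B}$ and $B_\mathcal{U}$ lie in $\{\gamma_L-1,\gamma_L\}$ while the CN degrees differ: $B_\mathcal{U}$ has one CN of degree $\kappa-\nu$ and $\gamma_L-1$ CNs of degree $\kappa$, whereas $B_\mathcal{B}$'s CNs all have degree $\kappa-a$ or $\kappa-a-1$.

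For the upper bound $\sigma^*(B_\mathcal{U})\leq\sigma_R^*$, I would exploit the fact that the $\nu$ rightmost VNs of $B_\mathcal{U}$, which lie in the $Q(\gamma_L,\nu;0)$ block, are not incident on CN~$0$ and therefore see only the $\gamma_L-1$ full-row CNs. Define a ``helping'' EXIT iteration on $B_\mathcal{U}$ in which the VN-to-CN messages emanating from the $\kappa-\nu$ leftmost VNs are frozen at the saturated value $1$. An edge-by-edge induction using monotonicity of $J_{out}^{(V)}$ and $J_{out}^{(C)}$ shows that the genuine messages are pointwise dominated by the helping ones. Since the $1-1=0$ contributions of the frozen leftmost VNs vanish from the CN-side sum in \eqref{Eq:J CN}, the helping iteration restricted to the rightmost $\nu$ VNs reduces to the EXIT recursion of the $(\gamma_L-1,\nu)$-regular protograph. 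Thus convergence of $B_\mathcal{U}$'s iteration at a given $\sigma$ forces convergence of the regular iteration, giving $\sigma\leq\sigma_R^*$ and hence $\sigma^*(B_\mathcal{U})\leq\sigma_R^*$.

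For the lower bound $\sigma_R^*\leq\sigma^*(B_\mathcal{B})$, I would run the EXIT recursions on $B_\mathcal{B}$ and on the $(\gamma_L-1,\nu)$-regular protograph in parallel from the common zero initialization, and prove by induction on the iteration count that every per-edge message of $B_\mathcal{B}$ dominates its regular counterpart. The VN step is automatic because every VN of $B_\mathcal{B}$ has degree at least $\gamma_L-1$, so its VN-side sum in \eqref{Eq:J VN} has at least $\gamma_L-2$ non-target terms, each dominating the reference by the inductive hypothesis and monotonicity of $J^{-1}$. The CN step is where the hypothesis is used: every row of $B_\mathcal{B}$ has $a$ or $a+1$ zeros, so its degree is at most $\kappa-a$, which by hypothesis is at most $\nu$; hence the CN-side sum in \eqref{Eq:J CN} has at most $\nu-1$ non-target terms, each dominated by the reference term in the opposite direction, so $1-J(\cdot)$ dominates as required. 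Monotonicity of $J$ then propagates the invariant. Since messages lie in $[0,1]$, convergence $x_R(t)\to 1$ of the regular iteration at any $\sigma<\sigma_R^*$ forces convergence of $B_\mathcal{B}$'s per-edge messages, giving $\sigma^*(B_\mathcal{B})\geq\sigma_R^*$.

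Combining the two inequalities yields the proposition. The main obstacle I anticipate is formalizing the inductive dominance cleanly under protograph EXIT, which tracks a separate scalar per edge orientation: one must verify the invariant uniformly over all edges in $B_\mathcal{B}$ (and on both sides of the coupling) using only the monotonicity of the EXIT functions and the degree bounds $d_v\geq\gamma_L-1$ and $d_c\leq\nu$. The substantive content of the hypothesis $\kappa-\lfloor\nu/\gamma_L\rfloor\leq\nu$ is thus concentrated in the CN-side step, where it is exactly what guarantees $d_c\leq\nu$ for every row of $B_\mathcal{B}$.
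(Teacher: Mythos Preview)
Your sandwich strategy is correct and matches the paper's overall plan: bound $\sigma^*(B_\mathcal{U})$ above and $\sigma^*(B_\mathcal{B})$ below by the threshold of a $(\gamma_L-1,\cdot)$-regular reference, using monotonicity of $J_{out}^{(V)}$ and $J_{out}^{(C)}$ together with the degree bounds $d_v\geq\gamma_L-1$ and $d_c\leq\kappa-a$. The execution differs in two places. First, the paper uses \emph{two} regular references, showing $\sigma^*(B_\mathcal{B})\geq\sigma^*(\gamma_L-1,\kappa-a)$ and $\sigma^*(B_\mathcal{U})\leq\sigma^*(\gamma_L-1,\nu)$, and then bridges them via Fact~\ref{Fact: Threshold monotonicity} using $\kappa-a\leq\nu$; you collapse both to the single reference $(\gamma_L-1,\nu)$ and consume the hypothesis directly in the CN step of the lower-bound induction. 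Second, for the lower bound the paper builds an auxiliary protograph $\hat B_\mathcal{B}$ with added degree-$1$ VNs on a time-varying channel so that its EXIT trajectory \emph{equals} that of the regular graph, and then compares $B_\mathcal{B}$ to $\hat B_\mathcal{B}$ via an explicit edge matching; your direct per-edge induction against the scalar regular trajectory bypasses this construction. For the upper bound the paper simply cites the sub-matrix lemma of \cite{RamCass18b}, whereas your ``helping'' iteration with the leftmost $\kappa-\nu$ VN messages frozen at $1$ is effectively a self-contained reproof of that lemma for this case. Your route is more elementary and avoids external citations; the paper's route yields the slightly sharper intermediate bound $\sigma^*(B_\mathcal{B})\geq\sigma^*(\gamma_L-1,\kappa-a)$, though both suffice for the proposition.
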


\begin{proof}
Let $\nu=a\gamma_L +b$. Consider a $ (\gamma_L-1,\kappa-a) $-regular protograph. Assume that we apply \eqref{Eq:J VN} and \eqref{Eq:J CN} on this protograph, and let $ x_\ell(\sigma) $ and $ u_\ell(\sigma) $ denote the resulting VN$\rightarrow$CN and CN$\rightarrow$VN EXIT values at iteration $ \ell $, respectively, given the channel parameter $\sigma$. We construct a $\gamma_L\times(2\kappa-\nu+b)$ protograph matrix $\hat{B}_\mathcal{B}$ as follows:
\begin{align*} 
\hat{B}_\mathcal{B}=&\left(\begin{array}{l:l:}
\mathbf{1}(\gamma_L,\kappa\!-\!\nu)\!-\! Q(\gamma_L,\kappa\!-\!\nu,0)& 
\mathbf{1}(\gamma_L,b)\!-\!S(\gamma_L,b)
\end{array}\right. \\
&\hspace{3mm}\begin{array}{l:l:}
Q(\gamma_L,\kappa\!-\!\nu,0)&
S(\gamma_L,b)\end{array} \\
&\hspace{2mm}\left.\begin{array}{l:l:l}
 Q(\gamma_L,a,\gamma_L-1) &
\dots &
Q(\gamma_L,a,0) 
\end{array} \right).
\end{align*}
In other words, $\hat B_\mathcal{B}$ is obtained from $B_\mathcal{B}$ by: 1) replacing the leftmost $\kappa-\nu$ entries in the first row with zeros such that all VNs are $\gamma_L-1$ regular, and 2) adding $\kappa-\nu+b$ columns of degree $1$ such that all CNs are $\kappa-a$ regular. We call the added degree-$1$ columns (VNs) ``auxiliary VNs" (see Fig.~\ref{Fig:Th_Proposition} for an example with $\kappa=5,\gamma_L=3,\nu=4$). Thus,  $\hat{B}_\mathcal{B}$ is $(\gamma_L-1,\kappa-a)$-regular except for the auxiliary VNs.
Next, we apply \eqref{Eq:J VN} and \eqref{Eq:J CN} on $\hat{B}_\mathcal{B}$ with a channel parameter $\sigma$ for non-auxiliary VNs, while the auxiliary VNs pass through a channel with a parameter $\sigma_\ell $ that changes in every iteration $\ell$ in a way that $ J(\sigma_\ell)=x_\ell(\sigma)$. It follows that the EXIT values passing over all edges of $ \hat B_\mathcal{B} $ equal to those passing over a $ (\gamma_L-1,\kappa-a) $-regular protograph, i.e., $ x_\ell(\sigma) $ and $ u_\ell(\sigma) $ for VN$\rightarrow$CN and CN$\rightarrow$VN messages, respectively. We match the edges in $B_\mathcal{B}$ to the edges in $\hat{B}_\mathcal{B}$ as follows. The edges connecting the $\nu$ rightmost columns in $B_\mathcal{B}$ match their identical edges in $\hat{B}_\mathcal{B}$, and the edges connecting bottom-most $\gamma_L-1$ CNs with the leftmost $\kappa-\nu$ VNs in $B_\mathcal{B}$ match their identical edges in $\hat{B}_\mathcal{B}$ as well. Finally, the edges connecting the top CN with the leftmost $\kappa-\nu$ VNs each matches one arbitrary edge connected to an auxiliary VN (see Fig.~\ref{Fig:Th_Proposition}).
\begin{figure}
    \centering
    \begin{tikzpicture}[>=latex]
\tikzstyle{cnode}=[rectangle,draw,fill=gray!70!white,minimum size=4mm]
\tikzstyle{vnode}=[circle,draw,fill=gray!70!white,minimum size=3mm]
\pgfmathsetmacro{\x}{1.7}
\pgfmathsetmacro{\y}{1.5}

	\foreach \c in {0,1,2}
	{
		\node[cnode] (c\c) at (\c*\x,0) {\tiny$\c$};	
	}
    \foreach \v in {1,2,3,4}
	{
		\pgfmathtruncatemacro{\k}{\v -1}
		\node[vnode] (v\v) at (\k*\x*2/3,-\y) {\tiny$\v$};	
		\node (ch\v) [below=2*\y mm of v\v] {\footnotesize $\sigma$};
		\draw [thick,->] (ch\v)--(v\v);
	}
	\node[vnode] (v0) at (\x,0.75*\y) {\tiny $0$};	
    \node (ch0) [above=2*\y mm of v0] {\footnotesize $\sigma$};
    \draw [thick,->] (ch0)--(v0);
    %graph edges
    \draw [thick] (c0)--(v2) node[fill=white,inner sep=1pt,pos = 0.24] {\footnotesize $e_4$};
    \draw [thick] (c0)--(v3) node[fill=white,inner sep=1pt,pos = 0.24] {\footnotesize $e_5$};
    \draw [thick] (c0)--(v0) node[fill=white,inner sep=1pt,pos = 0.7] {\footnotesize $e_1$};
    \draw [thick] (c1)--(v0) node[fill=white,inner sep=1pt,pos = 0.7] {\footnotesize $e_2$};
    \draw [thick] (c1)--(v1) node[fill=white,inner sep=1pt,pos = 0.2] {\footnotesize $e_6$};
    \draw [thick] (c1)--(v2) node[fill=white,inner sep=1pt,pos = 0.2] {\footnotesize $e_7$};
    \draw [thick] (c1)--(v4) node[fill=white,inner sep=1pt,pos = 0.16] {\footnotesize $e_8$};
    \draw [thick] (c2)--(v0) node[fill=white,inner sep=1pt,pos = 0.7] {\footnotesize $e_3$};
    \draw [thick] (c2)--(v1) node[fill=white,inner sep=1pt,pos = 0.2] {\footnotesize $e_9$};
    \draw [thick] (c2)--(v3) node[fill=white,inner sep=1pt,pos = 0.2] {\footnotesize $e_{10}$};
    \draw [thick] (c2)--(v4) node[fill=white,inner sep=1pt,pos = 0.4] {\footnotesize $e_{11}$};
    \node [below=14*\y mm of c1] {\footnotesize(a)};
    \begin{scope}[shift={(2.8*\x,0)}]
    \foreach \c in {0,1,2}
	{
		\node[cnode] (c\c) at (\c*\x,0) {\tiny $\c$};	
	}
    \foreach \v in {1,2,3,4}
	{
		\pgfmathtruncatemacro{\k}{\v -1}
		\node[vnode] (v\v) at (\k*\x*2/3,-\y) {\tiny $\v$};	
		\node (ch\v) [below=2*\y mm of v\v] {\footnotesize $\sigma$};
		\draw [thick,->] (ch\v)--(v\v);
	}
	\node[vnode] (v0) at (\x,0.75*\y) {\tiny $0$};	
    \node (ch0) [above=2*\y mm of v0] {\footnotesize $\sigma$};
    \draw [thick,->] (ch0)--(v0);

    \node (z) [inner sep = 0pt]at (c0|-v0) {};
    \node[vnode,left = 1mm of z,inner sep = 1pt] (a0) {\tiny $A_0$};	
    \node (cha0) [above=2*\y mm of a0] {\footnotesize $\sigma_\ell$};
    \draw [thick,->] (cha0)--(a0);
    \node[vnode,right = 1mm of z,inner sep = 1pt] (a1) {\tiny $A_1$};	
    \node (cha1) [above=2*\y mm of a1] {\footnotesize $\sigma_\ell$};
    \draw [thick,->] (cha1)--(a1);
    %graph edges
    \draw [thick] (c0)--(a1) node[fill=white,inner sep=1pt,pos = 0.7] {\footnotesize $e_1$};
    \draw [thick] (c0)--(a0);
    \draw [thick] (c0)--(v2) node[fill=white,inner sep=1pt,pos = 0.24] {\footnotesize $e_4$};
    \draw [thick] (c0)--(v3) node[fill=white,inner sep=1pt,pos = 0.24] {\footnotesize $e_5$};
    \draw [thick] (c1)--(v0) node[fill=white,inner sep=1pt,pos = 0.7] {\footnotesize $e_2$};
    \draw [thick] (c1)--(v1) node[fill=white,inner sep=1pt,pos = 0.2] {\footnotesize $e_6$};
    \draw [thick] (c1)--(v2) node[fill=white,inner sep=1pt,pos = 0.2] {\footnotesize $e_7$};
    \draw [thick] (c1)--(v4) node[fill=white,inner sep=1pt,pos = 0.16] {\footnotesize $e_8$};
    \draw [thick] (c2)--(v0) node[fill=white,inner sep=1pt,pos = 0.7] {\footnotesize $e_3$};
    \draw [thick] (c2)--(v1) node[fill=white,inner sep=1pt,pos = 0.2] {\footnotesize $e_9$};
    \draw [thick] (c2)--(v3) node[fill=white,inner sep=1pt,pos = 0.2] {\footnotesize $e_{10}$};
    \draw [thick] (c2)--(v4) node[fill=white,inner sep=1pt,pos = 0.4] {\footnotesize $e_{11}$};
    \node (b) [below=14*\y mm of c1] {\footnotesize(b)};
    \end{scope}
    \node (Y1) [left=1mm of cha0] {};
    \node (Y0) at(Y1|-b) {};
    \draw [dotted] (Y1)--(Y0);

    \end{tikzpicture}
    \caption{Graph constructions for Proposition~\ref{Prop:thU<thB}'s proof, with $\kappa=5,\gamma_L=3,\nu=4$: (a) corresponds to $B_\mathcal{B}$ and (b) corresponds to $\hat{B}_\mathcal{B}$. $A_0$ and $A_1$ are auxiliary VNs. The edge matching is illustrated via edge labels $\{e_i\}_{i=1}^{11}$.}
    \label{Fig:Th_Proposition}
\end{figure}
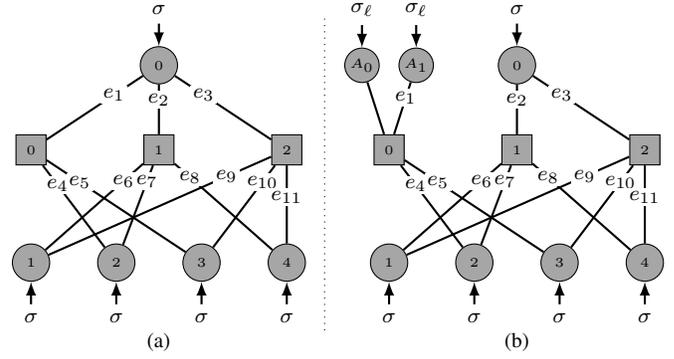
Given a channel parameter $ \sigma $, let $ y_\ell(\sigma,e) $ and $ w_\ell(\sigma,e) $ be the VN$\rightarrow$CN and CN$\rightarrow$VN EXIT values, respectively, over some edge $ e $ in the protograph $ B_\mathcal{B} $. 
From the monotonicity of \eqref{Eq:J VN} and \eqref{Eq:J CN} in their arguments and in node degrees, it can be shown by mathematical induction that for any $\sigma$ and every edge $ e $ 
\begin{align}
\label{Eq:y>x}
y_\ell(\sigma,e) \geq x_\ell(\sigma),\quad w_\ell(\sigma,e) \geq u_\ell(\sigma),\quad \forall \ell\geq 0\,.
\end{align}

If we mark $ \sigma^*(d_v,d_c) $ as the asymptotic threshold of a regular $ (d_v,d_c) $ protograph, then \eqref{Eq:y>x} implies that if the channel parameter satisfies $ \sigma<\sigma^*(\gamma_L-1,\kappa-a) $ then the EXIT algorithm over $ B_\mathcal{B} $ will converge to $ 1 $, thus\vspace{-0.12cm} 
\begin{align}\label{Eq:H_B>}
 \sigma^*(B_\mathcal{B}) \geq \sigma^*(\gamma_L-1,\kappa-a)\,.
\end{align}
From the sub-matrix lemma in \cite[Lemma~1]{RamCass18b} we have\vspace{-0.12cm} \begin{align}\label{Eq:H_U<}
\sigma^*(B_\mathcal{U})\leq \sigma^*(\gamma_L-1,\nu)\,.
\end{align}
Since $ \kappa-a\leq \nu $, combining \eqref{Eq:H_B>}--\eqref{Eq:H_U<} with Fact~\ref{Fact: Threshold monotonicity}, which holds since $\kappa-a = \kappa-\lfloor\nu/\gamma_L\rfloor\leq \nu$, completes the proof. 
\end{proof}

\subsection{Cycle Properties}
%In this subsection, we investigate the cycle properties of the balanced and unbalanced local constructions.

\begin{proposition}\label{Prop:C6U<C6B}
	 Let $\gamma_L\!=\!3$, $\kappa\!>\!0$, and $\nu\!=\!a\gamma_L\!+\!b\!<\!\kappa$ (where $a\geq 0$ and $0\leq b<\gamma_L$), and let $F(B_\mathcal{B})$ and $F(B_\mathcal{U})$ denote the number of cycles-$6$ in the protograph of the balanced and unbalanced local codes, respectively. Then $F(B_\mathcal{U})\leq  F(B_\mathcal{B})$. 
\end{proposition}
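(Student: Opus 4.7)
The plan is to reduce the inequality to a comparison of products of pairwise overlap parameters: derive a closed-form for the number of cycles-6 in any $3$-row protograph, observe that $B_\mathcal{B}$ and $B_\mathcal{U}$ agree on all other quantities appearing in that formula, and finish via a Schur-concavity argument.

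First, I would derive the cycle-6 count. Since $\gamma_L=3$, every cycle-6 of the protograph must visit all three CNs and three distinct VNs. Fixing the traversal order $(c_0,c_1,c_2)$ on the CN side, a cycle-6 corresponds to an ordered triple $(v_1,v_2,v_3)$ of distinct VNs with $v_1$ in both rows $0,1$, $v_2$ in both rows $1,2$, and $v_3$ in both rows $0,2$. The number of such triples with repetitions allowed is $t_{01}\,t_{12}\,t_{02}$, and any coincidence $v_k=v_\ell$ forces the coinciding VN into $t_{012}$. Inclusion-exclusion then yields
\begin{equation*}
F(B)=t_{01}\,t_{02}\,t_{12}-t_{012}\,(t_{01}+t_{02}+t_{12})+2\,t_{012}.
\end{equation*}

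Next, I would show two quantities are invariant across the two constructions. A quick inspection of the building blocks $\mathbf{1}(\gamma_L,\cdot)$, $Q(\gamma_L,a;i)$, and $S(\gamma_L,b)$ (the latter only for $b\in\{0,1,2\}$ since $\gamma_L=3$) shows that every column of $B_\mathcal{B}$ contains at most one zero; the same is trivially true for $B_\mathcal{U}$. Hence $t_{012}=\kappa-\nu$ in both cases. Writing $t_{ij}=\kappa-n_i-n_j$, where $n_i$ counts the zeros in row $i$, and using $n_0+n_1+n_2=\nu$, one also obtains $t_{01}+t_{02}+t_{12}=3\kappa-2\nu$ in both. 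Therefore
\begin{equation*}
F(B_\mathcal{B})-F(B_\mathcal{U})=[t_{01}t_{02}t_{12}]_{B_\mathcal{B}}-[t_{01}t_{02}t_{12}]_{B_\mathcal{U}}.
\end{equation*}

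Finally, I would compare these two products. For $B_\mathcal{U}$ the triple is $(\kappa-\nu,\kappa-\nu,\kappa)$. For $B_\mathcal{B}$ the three values differ pairwise by at most one (they equal $\kappa-2a$, $\kappa-2a-1$, or $\kappa-2a-2$, with multiplicities determined by $b\in\{0,1,2\}$). A direct partial-sum check confirms that the unbalanced triple majorizes the balanced one (both sum to $3\kappa-2\nu$, and the largest entry $\kappa$ of the former dominates the largest entry of the latter). Since the product of non-negative reals with fixed sum is Schur-concave, the balanced product is at least the unbalanced one, giving $F(B_\mathcal{U})\leq F(B_\mathcal{B})$.

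The main work is the cycle-6 enumeration in the first step; once that formula is in hand, the remainder is a short invariance-plus-Schur-concavity argument that uniformly handles the three cases $b\in\{0,1,2\}$ without a separate case split.
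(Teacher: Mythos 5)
Your proof is correct, and it reaches the result by a somewhat cleaner route than the paper's. Both arguments run through the same objects --- the overlap parameters $t_{\{0,1\}},t_{\{0,2\}},t_{\{1,2\}},t_{\{0,1,2\}}$ and an exact cycle-6 count --- but they diverge after that. The paper imports the piecewise formula $\mathcal{A}$ from \cite{EsfHar19}, substitutes the parameters of each construction (with indicator terms $(b>0)$, $(b>1)$ to cover the residue cases), and ends by asking the reader to ``compare'' the two resulting polynomials, an implicit case-by-case check. Your inclusion--exclusion derivation gives the same count in the symmetric closed form $t_{01}t_{02}t_{12}-t_{012}(t_{01}+t_{02}+t_{12})+2t_{012}$, which agrees algebraically with $\mathcal{A}$ when the $[\cdot]^+$ clippings are inactive (as they are here) and reproduces the values $201$ and $66$ in Table~\ref{Tab:gamL_GamC_3_gamL3_cycle_pop_threshold}. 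The observation that $t_{012}=\kappa-\nu$ and $\sum_{i<j}t_{ij}=3\kappa-2\nu$ are construction-independent then isolates the entire difference in the single product $t_{01}t_{02}t_{12}$, and majorization plus Schur-concavity of the product finishes uniformly in $b$. What this buys is transparency: the inequality becomes exactly the statement that spreading the zeros evenly across rows maximizes the product of pairwise overlaps at fixed sum, with no unchecked polynomial comparison. The only spots I would tighten are (i) an explicit note that every column of $B_\mathcal{B}$ has at most one zero (true, since the $Q$ and $S$ blocks occupy disjoint column ranges and each places at most one zero per column), which underlies both invariants, and (ii) the second partial-sum inequality for majorization, namely that the smallest balanced overlap $\kappa-2a-b$ is at least the smallest unbalanced overlap $\kappa-\nu=\kappa-3a-b$; both are one-line checks.
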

\begin{proof}
Consider any matrix $B$ of a local protograph with $\gamma_L=3$. The number of cycles-$6$ can be expressed in terms of the overlap parameters of matrix $B$ as follows: 
\begin{equation*}
    F(B)=\mathcal{A}(t_{\{0,1,2\}},t_{\{0,1\}},t_{\{0,2\}},t_{\{1,2\}})\,,
\end{equation*}
where $\mathcal{A}$ is given by (see \cite{EsfHar19})
\begin{equation}\label{equ_A}\vspace{-0.0cm}
\begin{split}
\mathcal{A}&(t_{\{i_1,i_2,i_3\}},t_{\{i_1,i_2\}},t_{\{i_1,i_3\}},t_{\{i_2,i_3\}})\\
=&\left(t_{\{i_1,i_2,i_3\}}[t_{\{i_1,i_2,i_3\}}-1]^+[t_{\{i_2,i_3\}}-2]^+\right)\\
+&\left(t_{\{i_1,i_2,i_3\}}(t_{\{i_1,i_3\}}-t_{\{i_1,i_2,i_3\}})[t_{\{i_2,i_3\}}-1]^+\right)\\
+&\left((t_{\{i_1,i_2\}}-t_{\{i_1,i_2,i_3\}})t_{\{i_1,i_2,i_3\}}[t_{\{i_2,i_3\}}-1]^+\right)\\
+&\left((t_{\{i_1,i_2\}}-t_{\{i_1,i_2,i_3\}})(t_{\{i_1,i_3\}}-t_{\{i_1,i_2,i_3\}})t_{\{i_2,i_3\}}\right)\,,\vspace{-0.2cm}
\end{split}
\end{equation}
and, $[\alpha]^+=\max\{\alpha,0\}$.
According to our constructions, no two zeros (out of the $\nu$ zeros) are located in the same column, thus
$ t_{\{0,1,2\}} = \kappa-\nu=\kappa-3a-b\geq 1$.
In the balanced construction, we have $t_{\{0,1\}}=\kappa-2a-b$, $t_{\{0,2\}}=\kappa-2a-(b>0)$, and $t_{\{1,2\}}=\kappa-2a-(b>1)$, where $(\mathrm{cond})$ is $1$ if $\mathrm{cond}$ is true and $0$ otherwise. Thus,\vspace{-0.1cm}
\begin{equation}\label{Eq:F_Balgamma3}
\begin{split}
    F(B_\mathcal{B})=&(\kappa-\nu)(\kappa-\nu-1)(\kappa-2a-(b>1)-2)\\
    +&(\kappa-\nu)(a+(b>1))(\kappa-2a-(b>1)-1)\\
    +&a(\kappa-\nu)(\kappa-2a-(b>1)-1)\\
    +&a(\kappa-\nu)(\kappa-2a-(b>1)).\vspace{-0.2cm}
\end{split}\vspace{-0.2cm}
\end{equation}
In the unbalanced construction, we have $t_{\{0,1\}}=t_{\{0,2\}}=\kappa-\nu$ and  $t_{\{1,2\}}=\kappa$. Thus,\vspace{-0.1cm}
\begin{align}\label{Eq:F_Unbalgamma3}
    F(B_\mathcal{U}) =(\kappa-\nu)(\kappa-\nu-1)(\kappa-2)\,.\vspace{-0.2cm}
\end{align}
Comparing \eqref{Eq:F_Balgamma3} and \eqref{Eq:F_Unbalgamma3} completes the proof.\vspace{-0.2cm}
\end{proof}

\begin{proposition}\label{pro_FB_FU_gamma4}
	 For $\gamma_L=4$, $\kappa>0$, and $\nu=a\gamma_L<\kappa$ (where $a>0$), the cycle-6 counts satisfy $F(B_\mathcal{U})>F(B_\mathcal{B})$. 
\end{proposition}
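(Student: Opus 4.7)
The plan is to compute $F(B_\mathcal{B})$ and $F(B_\mathcal{U})$ directly by summing the cycle-6 contribution $\mathcal{A}$ of equation~\eqref{equ_A} over all $\binom{\gamma_L}{3}=\binom{4}{3}=4$ triples of check-node rows, and then to compare the two totals. Since the formula \eqref{equ_A} is written in terms of overlap parameters only, the first step is to read off the relevant $t_{\{\cdot\}}$'s directly from the definitions \eqref{def_mat_BB} and \eqref{def_mat_BU} specialized to $\gamma_L=4$ and $\nu=4a$ (so $b=0$), which is a short combinatorial exercise using the block structure of the constructions.

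For the balanced code $B_\mathcal{B}$ the $\nu$ zeros are uniformly spread among the four rows, with exactly $a$ zeros per row and with no column containing more than one zero. Hence for every triple $\{i_1,i_2,i_3\}\subset\{0,1,2,3\}$, the same counts hold: $t_{\{i_1,i_2,i_3\}}=\kappa-3a$ and $t_{\{i_1,i_2\}}=t_{\{i_1,i_3\}}=t_{\{i_2,i_3\}}=\kappa-2a$ (in the all-ones prefix all three rows have $1$'s, and among the $Q(\gamma_L,a;\cdot)$ blocks exactly one block kills each of the three rows while a single block preserves all three). All four triples therefore contribute equally, and $F(B_\mathcal{B})$ is $4$ times one evaluation of $\mathcal{A}(\kappa-3a,\kappa-2a,\kappa-2a,\kappa-2a)$. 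For the unbalanced code $B_\mathcal{U}$ the situation splits: the single triple $\{1,2,3\}$ that avoids the concentrated row~$0$ has all overlaps equal to $\kappa$, contributing $\kappa(\kappa-1)(\kappa-2)$ by \eqref{equ_A}, while each of the three triples that include row~$0$ has $t_{\{0,i,j\}}=t_{\{0,i\}}=t_{\{0,j\}}=\kappa-\nu$ and $t_{\{i,j\}}=\kappa$, giving three equal terms of the form $\mathcal{A}(\kappa-\nu,\kappa-\nu,\kappa-\nu,\kappa)=(\kappa-\nu)(\kappa-\nu-1)(\kappa-2)$ after the cancellations $(t_{\{i_1,i_2\}}-t_{\{i_1,i_2,i_3\}})=(t_{\{i_1,i_3\}}-t_{\{i_1,i_2,i_3\}})=0$.

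The remaining step is to subtract the two expressions. Expanding $F(B_\mathcal{U})=3(\kappa-4a)(\kappa-4a-1)(\kappa-2)+\kappa(\kappa-1)(\kappa-2)$ and $F(B_\mathcal{B})=4[(\kappa-3a)(\kappa-3a-1)(\kappa-2a-2)+2a(\kappa-3a)(\kappa-2a-1)+a^2(\kappa-2a)]$ as polynomials in $\kappa$, I expect the coefficients of $\kappa^3$, $\kappa^2$ and $\kappa^1$ to cancel, leaving a constant difference. The hypothesis $\nu=4a<\kappa$ with $a\geq 1$ guarantees $\kappa-3a\geq 2$ and $\kappa-2a\geq 3$, so all $[\cdot]^+$ truncations in \eqref{equ_A} are inactive and the polynomial identity is valid. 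I anticipate the clean outcome $F(B_\mathcal{U})-F(B_\mathcal{B})=8a^2(4a-3)$, which is strictly positive for every $a\geq 1$ and thus proves the proposition.

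The main obstacle is nothing more than the bookkeeping in the polynomial subtraction: there are three separate cubics to expand in $F(B_\mathcal{B})$ and two in $F(B_\mathcal{U})$, and the identity $F(B_\mathcal{U})>F(B_\mathcal{B})$ depends on a delicate cancellation of all $\kappa$-dependent terms. A possible shortcut, which I would try second, is to rewrite both sides as sums of ``signed overlap'' contributions (the four summands in \eqref{equ_A}) and match them triple-by-triple, since the first summand already coincides after noting $3(\kappa-4a)(\kappa-4a-1)+(\kappa)(\kappa-1)$ versus $4(\kappa-3a)(\kappa-3a-1)$, potentially shortening the calculation; but the direct polynomial expansion sketched above is guaranteed to work and is the route I would present.
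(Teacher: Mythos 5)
Your proposal is correct and takes essentially the same route as the paper: read off the overlap parameters of the two constructions, sum \eqref{equ_A} over the four row-triples (your intermediate expressions are exactly the paper's \eqref{Eq:F_Balgamma4} and \eqref{Eq:F_Unbalgamma4}, with $a=\nu/4$), note the cancellations for the triples containing row $0$ of $B_\mathcal{U}$, and subtract. Your anticipated difference $F(B_\mathcal{U})-F(B_\mathcal{B})=8a^2(4a-3)=\tfrac{\nu^2}{2}(\nu-3)$ is the correct value of that subtraction (the paper records it as $\nu^2(\nu-3/2)$ up to sign, a harmless constant slip), and it is strictly positive for all $a\geq 1$, which is all the proposition needs.
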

\begin{proof}
Consider a local  protograph $B$ with $\gamma_L=4$. The number of cycles-$6$ in $B$ is given by
\begin{equation*}
\begin{split}
F(B)
&=\mathcal{A}(t_{\{0,1,2\}},t_{\{0,1\}},t_{\{0,2\}},t_{\{1,2\}})\\
&+\mathcal{A}(t_{\{0,1,3\}},t_{\{0,1\}},t_{\{0,3\}},t_{\{1,3\}})\\
&+\mathcal{A}(t_{\{0,2,3\}},t_{\{0,2\}},t_{\{0,3\}},t_{\{2,3\}})\\
&+\mathcal{A}(t_{\{1,2,3\}},t_{\{1,2\}},t_{\{1,3\}},t_{\{2,3\}}).
\end{split}
\end{equation*}

Again, zeros are never located in the same column according to our constructions. In the balanced construction, we have $t_{\{0,1\}}=t_{\{0,2\}}=t_{\{0,3\}}=t_{\{1,2\}}=t_{\{1,3\}}=t_{\{2,3\}}=\kappa-\nu/2 $, and $t_{\{0,1,2\}}=t_{\{0,1,3\}}=t_{\{0,2,3\}}=t_{\{1,2,3\}}=\kappa-3\nu/4$, thus
\begin{align}
\label{Eq:F_Balgamma4}
\begin{split}
    F(B_\mathcal{B})
    &=4(\kappa-3\nu/4)(\kappa-3\nu/4-1)(\kappa-\nu/2-2)\\
    &+4(\kappa-3\nu/4)(\nu/4)(\kappa-\nu/2-1)\\
    &+4(\nu/4)(\kappa-3\nu/4)(\kappa-\nu/2-1)\\
    &+4(\nu/4)(\nu/4)(\kappa-\nu/2).
\end{split}
\end{align}
In the unbalanced construction, we have $t_{\{0,1,2\}}=t_{\{0,1,3\}}=t_{\{0,2,3\}}=\kappa-\nu$, $t_{\{1,2,3\}}=\kappa$, $t_{\{0,1\}}=t_{\{0,2\}}=t_{\{0,3\}}=\kappa-\nu$, and  $t_{\{1,2\}}=t_{\{1,3\}}=t_{\{2,3\}}=\kappa$, thus
\begin{align}
\label{Eq:F_Unbalgamma4}
\begin{split}
    F(B_\mathcal{U})
    &\!=\!3(\kappa\!-\!\nu)(\kappa\!-\!\nu\!-\!1)(\kappa\!-\!2)\!+\!\kappa(\kappa\!-\!1)(\kappa\!-\!2).
\end{split}
\end{align}
In view of (\ref{Eq:F_Balgamma4}) and (\ref{Eq:F_Unbalgamma4}),  $F(B_\mathcal{B})-F(B_\mathcal{U})=\nu^2(3/2-\nu)<0
$ since $\nu \geq \gamma_L =4$.
\end{proof}
\begin{remark}
In Proposition~\ref{pro_FB_FU_gamma4}, we assumed $\nu$ is divisible by $\gamma_L$ only for simplicity. One can find a condition on $\nu$ for general case $\nu=a\gamma_L+b$ (where $a\geq 0$ and $0\leq b<\gamma_L$) such that $F(B_\mathcal{U})>F(B_\mathcal{B})$, by formulating the overlap parameters in terms of parameters $a$, $b$, and $\kappa$.
\end{remark}
\begin{remark}
For $\gamma_L=3$, there is a trade-off between cycle and threshold properties of local codes, and it is the designer discretion to choose between balanced and unbalanced schemes, depending on which feature is more desirable. This trade-off does not exist for $\gamma_L=4$, where the balanced scheme has better performance in both features.
\end{remark}

\section{Global Design}
\label{Sec:CC_opt}

In this section, we address the following question: given $\gamma_L$ LCNs, how one should design CCNs, i.e., entries in first $\gamma_C$ rows of the partitioning matrix $P$, in order to reduce the population of short cycles in the global code? The case of SC codes with no locality was optimally solved in \cite{EsfHar19}; as we will see, adding locality requires new considerations that convert the original problem of the optimal overlap partitioning to a well-defined constrained optimal overlap partitioning. 
We mark by $P_C$ and $P_L$ the upper $\gamma_C$ and lower $\gamma_L$ rows of $P$ (see Section~\ref{Sub:SC-LDPCL}), respectively, and assume that $P_L$ is given. We study three partitioning methods for determining $P_C$: 
\begin{enumerate}
\item\label{Item:CV_method} \textit{Cutting-vector (CV) partitioning} \cite{MitchellISIT2014}: let $0<\zeta_0<\zeta_1<\ldots<\zeta_{\gamma_C-1}$ be natural numbers. Set $[P_C]_{i,j}=1$ if and only if $j<\zeta_i$. In this paper, CV partitioning is used as a reference, and we consider uniform cutting vectors where $\zeta_{k}-\zeta_{k-1}$ is the the same for every $k\in\{0,\dots,\gamma_C-2\}$ (up to a residue due to possible indivisibility of $\kappa$ by $\gamma_C$). 
\item \label{Item:Unconstrained_OO} 
\textit{Locality-blind optimal (LBO) partitioning}: the optimal overlap partitioning for an SC code with $\gamma=\gamma_C$ (see Section~\ref{Sub:cycles_opt}). In other words, we are blind to the presence of LCNs that are already assigned to $B_0$, and optimize $P_C$ as there is no $P_L$.
\item \label{Item:Constrained_OO} \textit{Locality-aware optimal (LAO) partitioning}: the optimal overlap partitioning for an SC code with $\gamma=\gamma_C+\gamma_L$ and $P_L$ given as a constraint.
\end{enumerate}

In what follows, we focus on regular codes, i.e., $P_L$ is an all-zero matrix. Identifying the optimal partitioning is notably simpler with this assumption compared to cases with possibility of zero circulants, as Lemma~\ref{equ_O_ind_SCL} confirms. After optimization, we can replace the local code with an irregular code suggested in Section~\ref{Sec:local_des}. Recall that the rate of an SC code depends on the rate of the underlying code and the coupling termination, and does not depend on the partitioning.

\begin{lemma}\label{equ_O_ind_SCL} 
The set of independent non-zero overlap parameters for SC codes with $P_L=\mathbf{0}(\gamma_L,\kappa)$ is:
\begin{align*}
\mathcal{O}_\textnormal{ind}=\{t_{\{i_1,\dots,i_d\}}\textnormal{ }|\textnormal{ }1\leq d\leq \gamma_C,0\leq i_1,\dots,i_d<\gamma_C\}.
\end{align*}
The overlap parameters that are not included in $\mathcal{O_\textnormal{ind}}$ are either zero or functions of the overlap parameters in $\mathcal{O_\textnormal{ind}}$.
\end{lemma}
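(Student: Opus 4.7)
The plan is to classify each row index of the replica $R=[B_0^{\textnormal{T}}\;B_1^{\textnormal{T}}]^{\textnormal{T}}$ (which has $2\gamma = 2(\gamma_C+\gamma_L)$ rows) into four types according to whether the row belongs to $B_0$ or $B_1$ and whether it corresponds to a CCN or an LCN. Indices in $[0,\gamma_C)$ correspond to CCN rows of $B_0$ (type A); indices in $[\gamma_C,\gamma)$ to LCN rows of $B_0$ (type B); indices in $[\gamma,\gamma+\gamma_C)$ to CCN rows of $B_1$ (type C); and indices in $[\gamma+\gamma_C,2\gamma)$ to LCN rows of $B_1$ (type D). Under the hypothesis $P_L=\mathbf{0}(\gamma_L,\kappa)$, every LCN edge is placed in $B_0$, so type-B rows of $R$ are all-ones whereas type-D rows are all-zeros.

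Given this classification, I would carry out three successive reductions on an arbitrary admissible tuple $\{i_1,\ldots,i_d\}$ from \eqref{equ_nz_ov_par}. First, if the tuple contains a type-D index, the corresponding row of $R$ has no ones, so $t_{\{i_1,\ldots,i_d\}}=0$. Second, if the tuple contains a type-B index $i$, its row is all-ones and the corresponding conjunct in the definition of $t_{\{i_1,\ldots,i_d\}}$ is vacuous; hence $t_{I\cup\{i\}}=t_I$, and every type-B index may be peeled off without changing the value. Third, it remains to handle tuples whose indices lie in type~A $\cup$ type~C. Since the CCN portion of the original protograph is all-ones (i.e., $[B_0]_{i,j}+[B_1]_{i,j}=1$ for every $i\in[0,\gamma_C)$ and every column $j$), I would invoke the inclusion--exclusion identity underlying the analysis in \cite{EsfHar19}: for disjoint $I_0,I_1\subseteq[0,\gamma_C)$,
\begin{equation*}
t_{I_0\cup(I_1+\gamma)} \;=\; \sum_{J\subseteq I_1}(-1)^{|J|}\,t_{I_0\cup J},
\end{equation*}
which expresses any such parameter as a polynomial in overlap parameters supported on $[0,\gamma_C)$.

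Chaining the three reductions shows that every admissible overlap parameter is either zero or a polynomial in $\{t_{\{i_1,\ldots,i_d\}}:1\leq d\leq\gamma_C,\;0\leq i_1,\ldots,i_d<\gamma_C\}$, which is precisely the claimed $\mathcal{O}_\textnormal{ind}$. The main potential pitfall is not a deep mathematical step but careful bookkeeping: one must verify that the distinctness-mod-$\gamma$ constraint in \eqref{equ_nz_ov_par} is preserved throughout the reductions so that the inclusion--exclusion identity applies as stated. This is routine because peeling off type-B and type-D indices never creates a modular collision, and the inclusion--exclusion step never introduces indices outside $[0,\gamma_C)$.
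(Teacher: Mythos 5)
Your proof is correct and takes essentially the same route as the paper's: both arguments reduce $B_1$-row indices to $B_0$-row parameters via the complementarity $[B_1]_{i,j}=1-[B_0]_{i,j}$ on the regular protograph (the paper cites \cite[Lemma~3]{EsfHar19} where you write out the underlying inclusion--exclusion identity explicitly), and both then peel off LCN indices because under $P_L=\mathbf{0}(\gamma_L,\kappa)$ the LCN rows of $B_0$ are all-ones. Your separate handling of the all-zero LCN rows of $B_1$ as an explicit ``equals zero'' case is folded into the paper's first reduction, so the difference is organizational rather than substantive.
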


\begin{proof} First we assume $0\leq i_1,\cdots,i_{d_1}\leq \gamma-1$, $\gamma \leq j_1,\cdots,j_{d_2}\leq 2\gamma-1$, and $1\leq(d_1+d_2)\leq \gamma$. Then, as shown in \cite[Lemma~3]{EsfHar19} with $m=1$, $t_{\{i_1,\cdots,i_{d_1},j_1,\cdots,j_{d_2}\}}$ is a linear function of the overlap parameters in \vspace{-0.15cm}
\begin{align*}
\mathcal{O}_\textnormal{ind}'=\{t_{\{i_1,\dots,i_d\}}\textnormal{ }|\textnormal{ }1\leq d\leq \gamma,0\leq i_1,\dots,i_d<\gamma\}\,.
\end{align*}
Next, we assume $0\leq i_1,\cdots,i_{d_1}\leq \gamma_C-1$, $\gamma_C \leq j_1,\cdots,j_{d_2}\leq \gamma-1$, and $1\leq(d_1+d_2)\leq \gamma$. Then, 
$t_{\{i_1,\dots,i_{d_1},j_1,\dots,j_{d_2}\}}=t_{\{i_1,\dots,i_{d_1}\}}\,.$ This follows since all elements in rows $\{\gamma_C,\dots,\gamma-1\}$ of $B_0$ are $1$s, and thus the value of a degree-$d$ overlap parameter that is defined over a set of rows that includes some rows $j\in\{\gamma_C,\dots,\gamma-1\}$ is equal to the value of the overlap parameter when those rows are excluded.\vspace{-0.2cm}
\end{proof}

According to Lemma~\ref{equ_O_ind_SCL}, the number of independent overlap parameters is a function of $\gamma_C$ not $\gamma=\gamma_C+\gamma_L$. Thus, the complexity of LAO partitioning with $P_L=\mathbf{0}(\gamma_L,\kappa)$, does not increase when the SC-LDPC code features sub-block locality with regular local codes. %\textcolor{red}{(Maybe move next sentence to introduction)} The constrained OO partitioning is a new method with an affordable complexity to improve finite-length performance while satisfying the locality constraints.

\section{Simulation Results}
\label{Sec:Simulations}
In our simulations, we consider parameters $\kappa=p=13$, $\gamma_C=\gamma_L=3$, $m=1$, $L=10$, and AB lifting that yields cycle-4-free graphs. We investigate the performance of local and global decoding of SC-LDPC codes with sub-block locality constructed using various methods (new methods introduced in this paper and existing methods). Our results include the BER performances, cycle counts, and threshold values.

Let SC~Code~1, SC~Code~2, and SC~Code~3 be SC-LDPC codes with sub-block locality with the parameters given above, $P_L=\mathbf{0}(\gamma_L,\kappa)$, and constructed using CV, LBO, and LAO, respectively, as follows:
\begin{align*}
    P_{C,CV}\hspace{-0.05cm}&=\hspace{-0.15cm}\left(\begin{array}{ccccccccccccc}
        0&0&0&1&1&1&1&1&1&1&1&1&1 \\
        0&0&0&0&0&0&1&1&1&1&1&1&1 \\
        0&0&0&0&0&0&0&0&0&1&1&1&1 
    \end{array}\hspace{-0.1cm} \right)\hspace{-0.1cm},\\
    P_{C,LBO}\hspace{-0.05cm}&=\hspace{-0.15cm}\left(\begin{array}{ccccccccccccc}
        0&1&1&1&1&1&1&1&1&1&1&1&1 \\
        1&0&0&0&0&0&0&1&1&1&1&1&1 \\
        0&1&1&1&1&1&1&0&0&0&0&0&0 
    \end{array}\hspace{-0.1cm} \right)\hspace{-0.1cm},\\
     P_{C,LAO}\hspace{-0.05cm}&=\hspace{-0.15cm}\left(\begin{array}{ccccccccccccc}
        0&1&0&1&0&1&0&1&0&1&0&1&1 \\
        1&0&1&0&1&0&1&0&1&0&1&0&0 \\
        0&0&1&0&1&0&1&0&1&0&1&1&1 
    \end{array}\hspace{-0.1cm} \right)\hspace{-0.1cm}.
\end{align*}

Next, we add irregularity to the local codes and define SC~Code~4 and SC~Code~5 with the given parameters, $P_C=P_{C,LAO}$, and $\nu=10$. Consider the protograph matrix $B$ of a local code with dimensions $\gamma_L=3$ and $\kappa=13$. The matrix $P_L$ has the same dimensions as $B$, ${\left(P_L\right)}_{i,j}=X$ when $B_{i,j}=0$, and ${\left(P_L\right)}_{i,j}=0$ when $B_{i,j}=1$. SC~Code~4 has $P_L$ constructed from balanced matrix $B_\mathcal{B}$ defined in (\ref{def_mat_BB}) and SC~Code~5 has $P_L$ constructed from unbalanced matrix $B_\mathcal{U}$ defined in (\ref{def_mat_BU}). Let LC~Code~1 and LC~Code~2 represent the local codes for SC~Code~4 and SC~Code~5, respectively.

The population of cycles-$6$ and cycles-$8$ in the protographs and lifted graphs along with the threshold values are given in Table~\ref{Tab:gamL_GamC_3_gamL3_cycle_pop_threshold}. According to the results, the LAO method yields about $21\%$ reduction in the population of cycles-$6$ (both in protographs and lifted graphs) compared to the CV method, while this reduction is less than $5\%$ for the LBO method compared to the CV method. By removing $\nu=10$ circulants from local codes, we achieve further reductions in the number cycles-$6$, i.e., $63\%$ and $55\%$ for the LAO method with balanced and unbalanced irregularities, respectively, compared to the LAO method with $\nu=0$. In terms of asymptotic behavior, the local threshold of the balanced code (LC code 1) is higher than the local threshold of unbalanced code (LC code 2) as Proposition~\ref{Prop:thU<thB} predicts. In addition, the global thresholds of the SC codes when using irregular local codes (SC~Codes~4-5) are higher than the regular SC code (SC~Codes~1-3).

Note that the balanced method for adding irregularities results in both better global threshold and lower cycles-$6$ population for SC-LDPC codes with sub-block locality. However, for local decoding, there is a trade-off and the unbalanced scheme results in lower population of cycles-$6$ but also worse threshold compared to the balanced scheme.

\begin{table}
\centering
\caption{Cycle population and threshold ($C_k$ is cycle-$k$)\vspace{-0.2mm}}
\setlength\tabcolsep{4pt}
\begin{tabular}{c|c|c|c|c|c}
& proto $C_6$ & lifted  $C_6$ & proto $C_8$ & lifted $C_8$  & $\sigma^*$\\
\hline
\hline
SC~Code~1 & $173{,}232$ & $204{,}698$ & $3{,}741{,}840$ & $7{,}410{,}481$ & $0.8283$ \\
SC~Code~2 & $165{,}120$ & $195{,}624$ & $3{,}309{,}696$ & $7{,}161{,}258$ & $0.7995$ \\
SC~Code~3 & $137{,}362$ & $162{,}084$ & $2{,}957{,}941$ & $5{,}957{,}055$ & $0.8059$ \\
\hline
SC~Code~4 & $48{,}647$ & $59{,}202$ & $861{,}740$ & $1{,}560{,}143$ & $0.8382$\\
SC~Code~5 & $60{,}812$ & $72{,}267$ & $1{,}041{,}381$ & $2{,}284{,}048$ & $0.8373$ \\
\hline
LC~Code~1 & $201$ & $273$ & $0$ & $3{,}313$ & $0.5542$ \\
LC~Code~2 & $66$ & $78$ & $0$ & $9{,}014$ & $0.4961$
\end{tabular} 
\label{Tab:gamL_GamC_3_gamL3_cycle_pop_threshold}
\end{table}

Fig.~\ref{Fig:GlobalDecoding} compares the global-decoding performance for SC~Codes~1--5 over the AWGN channel. The plot shows the superiority of LAO partitioning for all SNR values, e.g., more than $1.5$ orders of magnitude compared to the CV method at SNR$=7$~dB. In addition, it shows that the LBO partitioning is inferior even to CV partitioning. Thus, when one adds locality considerations, one must re-design the global code as well. Moreover, adding irregularity improves the performance, and the balanced design outperforms the unbalanced design, e.g., more than $1.2$ orders of magnitude at SNR$=6.5$~dB.

\vspace{-1mm}

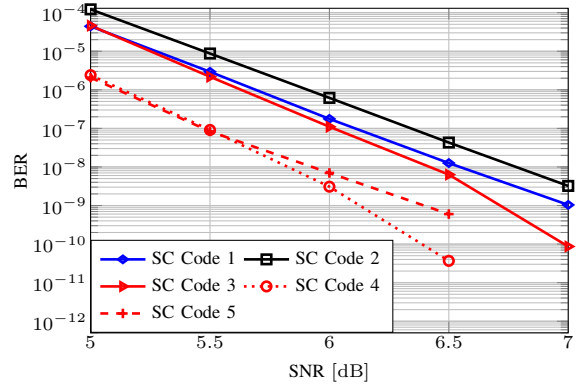
\begin{figure}[h!]
\centering
\begin{tikzpicture}

\begin{axis}[%
width=2.5in,
height=1.7in,
at={(0,0)},
scale only axis,
xmin=5,
xmax=7,
xlabel style={font=\scriptsize,yshift=2.5mm},
xlabel={SNR $[\mathrm{dB}]$},
xticklabel style = {font=\scriptsize,yshift=0.5ex},
xtick={5,5.5,...,7},
ymode=log,
ymin=5e-13,
yticklabel style = {font=\scriptsize,xshift=0.5ex},
ymax=0.00013,
yminorticks=true,
max space between ticks=10,
ylabel style={font=\scriptsize,yshift=-3mm},
ylabel={BER},
axis background/.style={fill=white},
xmajorgrids,
ymajorgrids,
yminorgrids,
legend columns=2, 
legend style={
    legend cell align = left,
    align = left, 
    anchor = south west,
    at = {(0,0)},
    font = \scriptsize,
    /tikz/column 2/.style={column sep=0.5pt},
    }
]
\addplot [color=blue, line width=1pt, mark size=1.8
pt, mark=diamond, mark options={solid, rotate=270, blue}]
  table[row sep=crcr]{%
5	4.44053321601707e-05\\
5.5	2.93350287269817e-06\\
6	1.76437869822485e-07\\
6.5	1.26065088757396e-08\\
7	1.03786982248521e-09\\
};
\addlegendentry{SC Code 1}

\addplot [color=black, line width=1pt, mark size=1.8pt, mark=square, mark options={solid, rotate=270, black}]
  table[row sep=crcr]{%
5	0.000121954277678194\\
5.5	8.74111676859768e-06\\
6	6.18082840236686e-07\\
6.5	4.29428007889546e-08\\
7	3.22390532544379e-09\\
};
\addlegendentry{SC Code 2}

\addplot [color=red, line width=1pt, mark size=1.8pt, mark=triangle, mark options={solid, rotate=270, red}]
  table[row sep=crcr]{%
5	4.62180995706157e-05\\
5.5	2.18311669131337e-06\\
6	1.10298224852071e-07\\
6.5	6.33372781065089e-09\\
7	8.67286559594252e-11\\
};
\addlegendentry{SC Code 3}

\addplot [color=red, dotted, line width=1pt, mark size=1.8pt, mark=o, mark options={solid, red}]
  table[row sep=crcr]{%
5	2.40089691534617e-06\\
5.5	9.10567372105834e-08\\
6	3.08090179551518e-09\\
6.5	3.67788025582828e-11\\
};
\addlegendentry{SC Code 4}

\addplot [color=red, dashed, line width=1pt, mark size=1.8pt, mark=+, mark options={solid, red}]
  table[row sep=crcr]{%
5	2.07715737117249e-06\\
5.5	8.34934911242604e-08\\
6	6.99526627218935e-09\\
6.5	6e-10\\
};
\addlegendentry{SC Code 5}

\end{axis}
\end{tikzpicture}%\vspace{-0.2cm}
\caption{\label{Fig:GlobalDecoding}Global-decoding BER curves for the proposed SC codes with $\gamma_C=3$, $\gamma_L=3$, $\kappa=p=13$, $m=1$, $L=10$, over the AWGN channel.\vspace{-0.2cm}}
\end{figure}

Fig.~\ref{Fig:LocalDecoding} compares the local-decoding performance of LC~Codes~1-2 over the AWGN channel. In the low-SNR regime the balanced construction is superior over the unbalanced one, while in the high-SNR regime the trend is opposite.
%The plot shows the superiority of the balanced scheme in the low SNR regime and unbalanced scheme at high SNRs. 
This observation is consistent with  Propositions~\ref{Prop:thU<thB} and \ref{Prop:C6U<C6B} \footnote{The difference will get more prominent if we increase the SNR. Due to the complexity of collecting BER points in the deep error floor region, we were not able to exemplify this.}\vspace{-0.2cm}.

\vspace{-1mm}

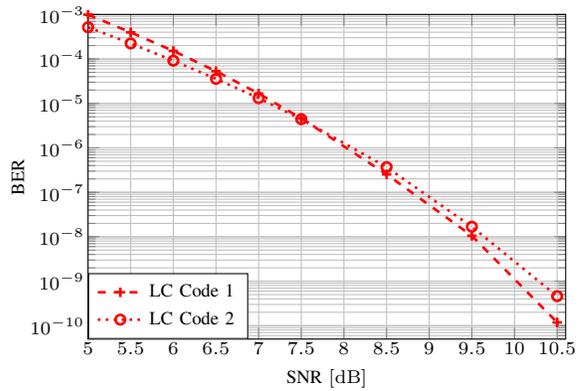
\begin{figure}[h!]
\centering
\begin{tikzpicture}

\begin{axis}[%
width=2.5in,
height=1.7in,
at={(0,0)},
scale only axis,
xmin=5,
xmax=10.6,
xlabel style={font=\scriptsize,yshift=2.5mm},
xlabel={SNR $[\mathrm{dB}]$},
xticklabel style = {font=\scriptsize,yshift=0.5ex},
xtick={5,5.5,...,10.5},
ymode=log,
ymin=5e-11,
ymax=1e-3,
yticklabel style = {font=\scriptsize,xshift=0.5ex},
max space between ticks=10,
yminorticks=true,
ylabel style={font=\scriptsize,yshift=-3mm},
ylabel={BER},
axis background/.style={fill=white},
xmajorgrids,
ymajorgrids,
yminorgrids,
legend style={
    legend cell align = left,
    align = left, 
    anchor = south west,
    at = {(0,0)},
    font = \scriptsize,
    row sep=0.01pt
    }
]
\addplot [color=red, dashed, line width=1pt, mark size=1.8pt, mark=+, mark options={solid, red}]
  table[row sep=crcr]{%
5	0.000989980242284304\\
5.5	0.000396922783297951\\
6	0.000151595709693521\\
6.5	5.3174332843336e-05\\
7	1.66752353805254e-05\\
7.5	4.61995781117573e-06\\
8.5	2.53798816568047e-07\\
9.5	1.0508875739645e-08\\
10.5	1.18343195266272e-10\\
};
\addlegendentry{LC Code 1}

\addplot [color=red, dotted, line width=1pt, mark size=1.8pt, mark=o, mark options={solid, red}]
  table[row sep=crcr]{%
5	0.000517269967926235\\
5.5	0.000223371501618118\\
6	9.15806964008017e-05\\
6.5	3.57331706570701e-05\\
7	1.31436135110951e-05\\
7.5	4.42086761785788e-06\\
8.5	3.6870241952038e-07\\
9.5	1.68165680473373e-08\\
10.5	4.61538461538462e-10\\
};
\addlegendentry{LC Code 2}

\end{axis}
\end{tikzpicture}%\vspace{-0.2cm}
\caption{\label{Fig:LocalDecoding}Local-decoding BER curves for balanced (LC Code 1) and unbalanced (LC Code 2) codes with $\gamma_L=3$, $\kappa=p=13$, over the AWGN channel.\vspace{-0.2cm}}
\end{figure}

For Monte Carlo simulations depicted in Fig.~\ref{Fig:GlobalDecoding} and Fig.~\ref{Fig:LocalDecoding}, we observed at least $50$ frame errors in all collected points.

\section{Acknowledgment} Research supported in part by a grant from ASRC-IDEMA, NSF CCF-BSF:CIF 1718389, and ISF 2525/19.

\bibliographystyle{IEEEtran}
\bibliography{IEEEabrv,references}

% Generated by IEEEtran.bst, version: 1.13 (2008/09/30)
\begin{thebibliography}{10}
\providecommand{\url}[1]{#1}
\csname url@samestyle\endcsname
\providecommand{\newblock}{\relax}
\providecommand{\bibinfo}[2]{#2}
\providecommand{\BIBentrySTDinterwordspacing}{\spaceskip=0pt\relax}
\providecommand{\BIBentryALTinterwordstretchfactor}{4}
\providecommand{\BIBentryALTinterwordspacing}{\spaceskip=\fontdimen2\font plus
\BIBentryALTinterwordstretchfactor\fontdimen3\font minus
  \fontdimen4\font\relax}
\providecommand{\BIBforeignlanguage}[2]{{%
\expandafter\ifx\csname l@#1\endcsname\relax
\typeout{** WARNING: IEEEtran.bst: No hyphenation pattern has been}%
\typeout{** loaded for the language `#1'. Using the pattern for}%
\typeout{** the default language instead.}%
\else
\language=\csname l@#1\endcsname
\fi
#2}}
\providecommand{\BIBdecl}{\relax}
\BIBdecl

\bibitem{FelstromIT1999}
A.~J. Felstrom and K.~S. Zigangirov, ``Time-varying periodic convolutional
  codes with low-density parity-check matrix,'' \emph{IEEE Transactions on
  Information Theory}, vol.~45, no.~6, pp. 2181--2191, Sep. 1999.

\bibitem{KudekarRichUrb13}
S.~{Kudekar}, T.~{Richardson}, and R.~L. {Urbanke}, ``Spatially coupled
  ensembles universally achieve capacity under belief propagation,'' \emph{IEEE
  Transactions on Information Theory}, vol.~59, no.~12, pp. 7761--7813, Dec.
  2013.

\bibitem{Mitchell11}
D.~G.~M. {Mitchell}, M.~{Lentmaier}, and D.~J. {Costello}, ``{AWGN} channel
  analysis of terminated {LDPC} convolutional codes,'' in \emph{Proc.
  Information Theory and Applications Workshop (ITA)}, La Jolla, {CA}, Feb.
  2011, pp. 1--5.

\bibitem{MitchLentCost15}
D.~G.~M. {Mitchell\text{}}, M.~{Lentmaier}, and D.~J. {Costello}, ``Spatially
  coupled {LDPC} codes constructed from protographs,'' \emph{IEEE Transactions
  on Information Theory}, vol.~61, no.~9, pp. 4866--4889, Sep. 2015.

\bibitem{EsfHar19}
H.~{Esfahanizadeh}, A.~{Hareedy}, and L.~{Dolecek}, ``Finite-length
  construction of high performance spatially-coupled codes via optimized
  partitioning and lifting,'' \emph{IEEE Transactions on Communications},
  vol.~67, no.~1, pp. 3--16, Jan. 2019.

\bibitem{RamCass18b}
E.~{Ram} and Y.~{Cassuto}, ``Spatially coupled {LDPC} codes with random
  access,'' in \emph{Proc. International Symposium on Turbo Codes Iterative
  Information Processing (ISTC)}, Hong Kong, Dec. 2018, pp. 1--5.

\bibitem{CassHemo17}
Y.~{Cassuto}, E.~{Hemo}, S.~{Puchinger}, and M.~{Bossert}, ``Multi-block
  interleaved codes for local and global read access,'' in \emph{Proc.
  International Symposium on Information Theory (ISIT)}, Aachen, Germany, Jun.
  2017, pp. 1758--1762.

\bibitem{RamCass18A}
E.~{Ram} and Y.~{Cassuto}, ``{LDPC} codes with local and global decoding,'' in
  \emph{Proc. International Symposium on Information Theory (ISIT)}, Vail,
  {CO}, Jun. 2018, pp. 1151--1155.

\bibitem{Liva}
G.~{Liva} and M.~{Chiani}, ``Protograph {LDPC} codes design based on {EXIT}
  analysis,'' in \emph{Proc. IEEE Global Telecommunications Conference
  (GLOBCOM)}, Washington, {DC}, Nov. 2007, pp. 3250--3254.

\bibitem{TenBrink04}
S.~{Ten Brink}, G.~{Kramer}, and A.~{Ashikhmin}, ``Design of low-density
  parity-check codes for modulation and detection,'' \emph{IEEE Transactions on
  Communications}, vol.~52, no.~4, pp. 670--678, Apr. 2004.

\bibitem{FanTurbo2000}
J.~L. Fan, ``Array codes as low-density parity-check codes,'' in \emph{Proc.
  International Symposium on Turbo Codes and Iterative Information Processing
  (ISTC)}, Brest, France, Sep. 2000, pp. 543--546.

\bibitem{SmarandacheIT2012}
R.~{Smarandache} and P.~O. {Vontobel}, ``Quasi-cyclic {LDPC} codes: Influence
  of proto- and {Tanner}-graph structure on minimum hamming distance upper
  bounds,'' \emph{IEEE Transactions on Information Theory}, vol.~58, no.~2, pp.
  585--607, Feb. 2012.

\bibitem{MitchellISIT2014}
D.~G.~M. Mitchell, L.~Dolecek, and D.~J. Costello, ``Absorbing set
  characterization of array-based spatially coupled {LDPC} codes,'' in
  \emph{Proc. IEEE International Symposium on Information Theory (ISIT)},
  Honolulu, {HI}, Jun. 2014, pp. 886--890.

\end{thebibliography}

\end{document}